\let\NAT@parse\undefined
\newif\ifnotsergio
\newif\ifsergio
\newcounter{teocount}
\newcounter{propcount}
\newcounter{remcount}
\newcounter{excount}
\newtheorem{remm}[remcount]{Remark}
\newtheorem{proposition}[propcount]{Proposition}
\newtheorem{theorem}[teocount]{Theorem}
\newtheorem{ex}[excount]{Example}
\newenvironment{proof}{{\em Proof. }}{\hfill \hspace*{1pt} \hfill $\blacksquare$}
\newenvironment{remark}{\begin{remm}\rm }{\hfill \hspace*{1pt} \hfill
$\star$\end{remm}}
\begin{document}
\author{Ross Drummond, Luis D. Couto and Dong Zhang
\thanks{ Ross Drummond is with the Department of Engineering Science, University of Oxford, 17 Parks Road, OX1 3PJ, Oxford, United Kingdom. Email: ross.drummond@eng.ox.ac.uk.

 Luis D. Couto is with the Department of Control Engineering and System Analysis, Universit\'{e} libre de Bruxelles, Brussels, B-1050, Belgium. Email: lcoutome@ulb.ac.be.

Dong Zhang is with the Department of Mechanical Engineering, Carnegie Mellon University, 5000 Forbes Avenue, Pittsburgh, PA 15312, United States. Email: dongzhr@cmu.edu.

}
}

\IEEEoverridecommandlockouts

\title{Resolving Kirchhoff's laws for state-estimator design of Li-ion battery packs connected in parallel}

\maketitle
\thispagestyle{plain}
\pagestyle{plain}

\begin{abstract}

A state-space model for Li-ion battery packs with parallel connected cells is introduced. 
The key feature of this model is an explicit solution to Kirchhoff's laws for parallel connected packs, which expresses the branch currents directly in terms of the model's states, applied current and cell resistances.
This avoids the need to solve these equations numerically. 
To illustrate the potential of the proposed model for pack-level control and estimation, a state-estimator is introduced for the nonlinear parallel pack model.
By exploiting the system structure seen in the solution to Kirchhoff's laws, algebraic conditions for the observer gains are obtained that guarantee convergence of the estimator's error. Error convergence is demonstrated through an argument based upon Aizerman's conjecture. 
It is hoped that the insight brought by this model formulation will allow the wealth of results developed for series connected packs to be applied to those with parallel connections. 

\end{abstract}

\begin{IEEEkeywords}
Li-ion battery packs, parallel connections, nonlinear state-estimators. 
\end{IEEEkeywords}

\section*{Introduction}

To address ever increasing energy and power demands, Li-ion battery pack sizes are growing rapidly, especially for large-scale applications such as electric vehicles and grid storage. In some parts of the world, it is now common to see electric vehicles powered by thousands of cells, like the Tesla Model S \cite{brand2016current},  and large batteries, like the planned 50 MW battery to be run near Oxford by Pivot Power \cite{pivot}, are now coming online to support the grid. The sheer number of cells in these large battery packs introduces several challenges that need to be overcome, especially with the design of the battery management system (BMS). The BMS is predominantly responsible for estimating the state of charge and health of the pack, however, as pack sizes continue to grow, ensuring that the BMS algorithms remain both accurate and scalable enough to be implemented on embedded hardware is becoming ever more challenging.  

Battery models are the foundations for any advanced BMS and to perform at its best, it is desirable for the BMS to have information about every cell in the pack. This has motivated significant efforts to develop models for whole battery packs. However, whilst most large battery packs used in practice are mixtures of both parallel and series connections, most studies on pack level modelling and BMS design are restricted to just series connections, for example \cite{zhong2014method,Lin-2014}. Focusing explicitly on series connected cells greatly simplifies the problem, as every cell in series is charged with the same current, but neglects the diverse spectrum of pack configurations seen in practice. 

Whilst including parallel connections into the pack can bring many benefits, such as increased reliability \cite{brand2016current} and natural self-balancing \cite{zhong2014method}, modelling and supervising parallel connected cells has proven to be more challenging than cells in series. This  is primarily because the branch currents charging each parallel branch have to be computed at each time instant in the models. The branch currents are obtained by computing solutions to Kirchhoff's laws, which makes the resulting pack models differential algebraic equation models (DAEs). DAEs models can be significantly more complex than those described by ordinary differential equations (ODEs), so most studies on parallel packs numerically compute solutions to Kirchhoff's laws before projecting the state of the index 1 DAE down into an ODE. Examples of this approach include the iterative scheme of \cite{diao2019management}, the frequency domain  approximations  of \cite{chang2019correlations} and the numerical matrix inversion methods of studies like \cite{D_L} and \cite{bruen2016modelling} which was augmented with a thermal model in \cite{hosseinzadeh2019combined}. In contrast, this work obtains an ODE model by providing an analytical solution to Kirchhoff's laws for $n$-cells connected parallel. Thus, the main result of this work can be thought of as providing an analytical solution, in terms of the various cell resistances, to the branch current equations defined by Kirchhoff's laws, in place of the numerical solutions in benchmark studies like (20) of \cite{D_L} and (15) of \cite{bruen2016modelling}. This approach follows  along a recent direction in battery pack modelling, including the cell merging approach of \cite{fan2020simplified}, and generalises similar efforts like \cite{fill2018current, hofmann2018dynamics} by relaxing some of the restrictive modelling assumptions, like the linearity of the open cell voltage \cite{fill2018current}, as well as providing a more involved  model formulation that additionally includes the important state-of-charge dynamics than \cite{hofmann2018dynamics}. With a state-space formulation for the parallel connected Li-ion battery pack in hand, the state-estimator design problem can then be addressed, with simple gain conditions given in  Section \ref{sec:obs}.

\emph{Contribution:} To be specific, the main contribution of this paper is to introduce a state-space model for parallel connected packs that is fully described by an ordinary differential equation explicitly parametersied by the various resistances and capacitances of the pack's cells. To achieve this, an analytic solution to the algebraic equation of Kirchhoff's laws is stated (see Section \ref{sec:alg}). With this equation in hand, the parallel pack model can then be condensed into a state-space form with an appealing structure that can be exploited for analysis. To illustrate this point, a new state estimator for this pack model is introduced in Section \ref{sec:obs}, whose main benefit over existing approaches is that checking asymptotic convergence of the estimator  error for the nonlinear system is simple, as the conditions for convergence 
 are algebraic. 
 
 State estimators are key components of battery management systems, but, the estimator design problem for parallel packs has received significantly less attention than that for cells in series. There are two main reasons for this: 1. It is widely assumed that cells in parallel have the same state-of-charge because of this setup's natural self-balancing \cite{zhong2014method} but the simulations of \cite{D_L} suggest this may not always hold; 2. The need to resolve the branch currents makes parallel pack models more complex to analyse. The results proposed here are directed at this second issue, with the analytic expression for the branch currents bringing insight into the model structure that can be exploited. 
 

The results presented here are in many ways an extension of the recent results of \cite{D_L} from some of the authors. In \cite{D_L}, an observer was designed for a DAE model of a parallel connected pack but the DAE element of this model introduced severe complexity into the analysis of the model's vector field and the conditions guaranteeing convergence of the observer error. This paper resolves these issues by exploiting the analytic expression for the parallel branch current. It is hoped that the analysis presented in this paper will lead to new results in other applications where parallel pack models are used, for example in determining the weakest cells in the packs, detecting thermal runaway and enabling whole pack state-estimators for large Li-ion battery packs.

\begin{figure}
\centering
\includegraphics[width=0.4\textwidth]{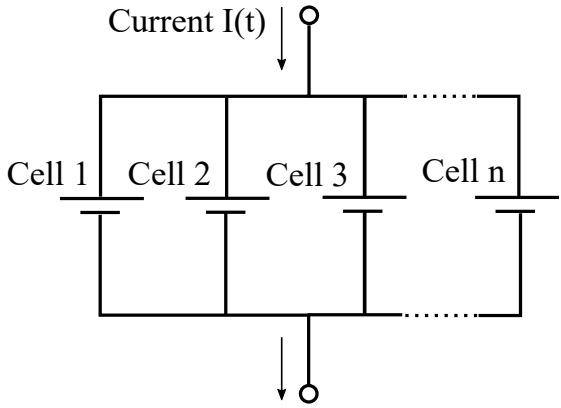}
        \caption{A Li-ion battery pack containing $n$ cells connected in parallel.} 
        \label{fig:pack}
\end{figure}

\begin{figure}
\centering
\includegraphics[width=0.4\textwidth]{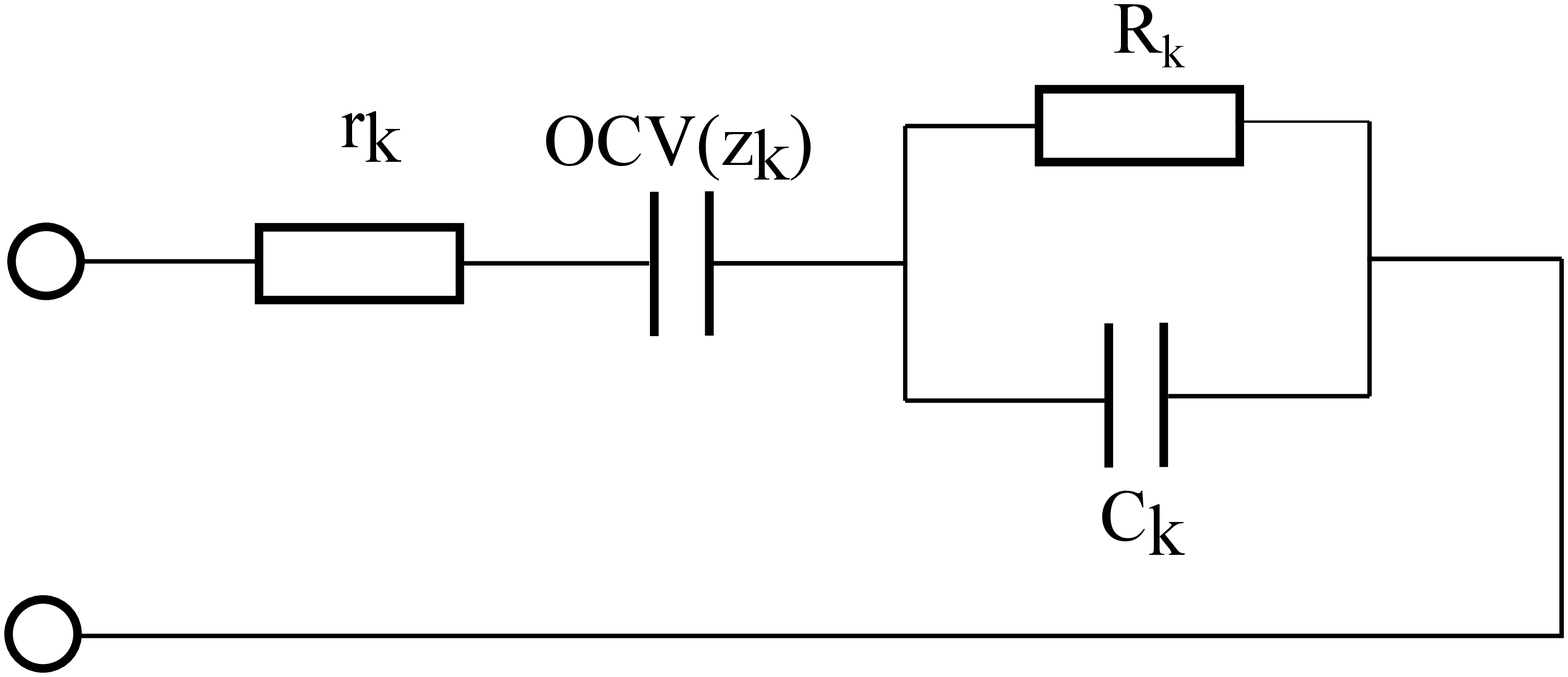}
        \caption{Equivalent circuit model for the battery dynamics. Here, $r_k$ is the $k^\text{th}$ cell's resistance,  OCV$(z_k)$ is its open circuit voltage and $(R_k,\,C_k)$ denote an RC-pair. }
        \label{fig:circ}
\end{figure}

\emph{Notation: }
If a square matrix $A$ of dimension $n$ is positive definite then $A \in \mathbb{S}^n_{\succ_0}$ and if it is negative definite then $A \in \mathbb{S}^n_{\prec_0}$.  More generally, if a matrix $A$ is negative-definite then $A \prec 0 $. If $A$ is a non-negative diagonal matrix then  $A \in \mathbb{D}^n_{+}$. The identity of dimension $n$ is denoted $I_n$. A signal $x(t)$ is said to be in the Hilbert space $x \in \mathcal{L}_2$ if the norm
\begin{align}
    \|x\|_2 = \sqrt{\int^{\infty}_{0} x(t)^2~dt}
\end{align}
is bounded. 
\section{DAE model for a parallel connected pack}\label{sec:DAE}
In this section, the equations of a DAE model for Li-ion batteries connected in parallel are described. In Section \ref{sec:alg}, this DAE model is converted into an ODE by resolving the underlying algebraic equation for the branch currents.

\subsection{Parallel pack model equations}
Figure \ref{fig:pack} shows the set-up of the parallel connected Li-ion battery pack that is to be modelled. Each cell is assumed to be described by the equivalent circuit model of Figure \ref{fig:circ}, composed of a capacitor (for the state-of-charge) and an RC pair (generally associated with solid-state diffusion in the active material particles). The dynamics of the $k^{\text{th}}$ cell in the pack with this circuit model are  
\begin{subequations}\label{circuit}\begin{align}
\dot{x}_k(t)  & = \bar{A}_kx_k(t) + \bar{B}_k i_k(t), \quad k = 1, \dots, n,\\
v_k(t)  & = w_k(t) + \text{OCV}(z_k(t))+ r_ki_k(t),
\end{align}\end{subequations}
where $x_k = [z_k, w_k]$ is the state of the system, $z_k$ is the state-of-charge of each cell and $w_k$ is the relaxation voltage of the capacitor in the $k^{\text{th}}$ RC pair. The current going into each parallel branch is $i_k(t)$. Each cell's voltage $v_k(t)$ is a function of the relaxation voltage $w_k$, the open circuit voltage OCV($z_k$) and the resistance $r_k$. Because the cells are connected in parallel, each cell's voltage is the same $v_k(t) = v(t), \, \forall k = 1, \dots, n$. The state space matrices in \eqref{circuit} are 
\begin{align}
\bar{A}_k = \begin{bmatrix} 0 & 0 \\ 0 & -\frac{1}{R_kC_k} \end{bmatrix}, \quad \bar{B}_k = \begin{bmatrix} \frac{1}{Q_k} \\ \frac{1}{C_k}\end{bmatrix}.
\end{align}
where $Q_k$ is the battery capacity and $(R_k,C_k)$ represent the RC pair.

What remains is to compute the branch current $i_k$ going into each cell. This is achieved by applying Kirchhoff's laws. Namely, Kirchhoff's voltage law implies
\begin{subequations}\label{Kirchhoff}\begin{align}
\text{OCV}(z_j) + w_j + r_{j}i_j= \text{OCV}(z_k) + w_k + r_k i_k, \nonumber \\
j,k \in \{ 1, 2, \dots, n\},
\end{align}
and the current law states that the sum of the currents going into each branch $i_k(t)$ equals the pack current $I(t)$
\begin{align}
\sum^n_{k = 1}i_k(t) = I(t).
\end{align}\end{subequations}

\subsection{Differential algebraic equation parallel pack model}
When combined, the dynamic circuit equations \eqref{circuit} and the algebraic equations for Kirchhoff's laws \eqref{Kirchhoff}  can be collected into a single DAE system \cite{D_L},
\begin{align}\label{sys_DAE}
\begin{bmatrix}I_{2n} & 0 \\ 0 & 0 \end{bmatrix}
\begin{bmatrix}\dot{x} \\ \dot{i} \end{bmatrix}
=
\begin{bmatrix}A_{11} & A_{12} \\ A_{21} & A_{22} \end{bmatrix}
\begin{bmatrix}{x} \\ {i} \end{bmatrix}
+
\begin{bmatrix}0 \\ \phi(t) \end{bmatrix}
\end{align}
where $A_{11} = \text{diag} (\bar{A}_1, \, \bar{A}_2, \, \dots, \, \bar{A}_n)$, $A_{12} = \text{diag} (\bar{B}_1, \, \bar{B}_2, \, \dots, \, \bar{B}_n)$,
\begin{subequations}\begin{align}
A_{22}   & = \begin{bmatrix}
r_1  & -r_2 & 0 & \dots & 0 \\
r_1 & 0 & -r_3 & \ddots & \vdots \\
\vdots & \vdots & \ddots & \ddots & 0 \\
r_1 & 0 & \dots & 0 & -r_n \\
1 & 1 & 1 & \dots & 1
\end{bmatrix}\label{A22}
\\
A_{21}  & =  \begin{bmatrix} 0 & 1 & S & 0 & \dots & 0 \\ 
0 & 1 & 0 & S & \ddots & \vdots \\
\vdots & \vdots & \vdots & \ddots  & \ddots & 0\\
0 & 1 & 0 & \dots  & 0& S \\ 
0 & 0 & 0 & 0 & \dots & 0 \end{bmatrix}, \quad S = \begin{bmatrix} 0 & -1 \end{bmatrix},
\end{align}
and
\begin{align}
\phi(t) = \begin{bmatrix}\text{OCV}(z_1)-\text{OCV}(z_2) \\  \vdots \\ \text{OCV}(z_1)-\text{OCV}(z_n) \\ I(t) \end{bmatrix}.
\end{align}\end{subequations}
The variables with a time derivative $x = [x_1, \, \dots, \, x_n]^T$ are known as the differential or state-space variables whilst the current vector $i = [i_1, \, \dots, \, i_n]^T$ is  the model's algebraic variable.  Due to the linearity of Kirchhoff's laws, these currents can be obtained directly from
\begin{align}\label{i_exp}
i(t) = -{A_{22}}^{-1}A_{21}x(t)  -{A_{22}}^{-1}\phi(t),
\end{align}
given that the matrix $A_{22}^{-1}$ is invertible (as shown in Section \ref{sec:alg}). Substituting the expression for the currents \eqref{i_exp} into the DAE model \eqref{sys_DAE} reduces it to an ODE
\begin{align}\label{sys_ODE}
\dot{x}(t) = (A_{11}-A_{12}{A_{22}}^{-1}A_{21})x(t) -A_{12}{A_{22}}^{-1}\phi(t).
\end{align}

This is a standard model for parallel connected Li-ion battery packs, but, no analytic expression has previously been obtained for the matrices $A_{12}{A_{22}}^{-1}A_{21}$ nor $A_{12}{A_{22}}^{-1}$. In the next section, these expressions are given by writing out an explicit solution for the inverse of the $A_{22}$ matrix.

\section{Resolving the algebraic equation}\label{sec:alg}

The main result of this paper are contained in this section where an algebraic solution for the current going into each branch of the parallel circuit is provided. In this way, the the parallel connected pack model of Section \ref{sec:DAE} can be fully characterised. 

\subsection{The matrix inverse ${A_{22}}^{-1}$}
The main stumbling block behind resolving the algebraic equation for the currents is determining the matrix inverse ${A_{22}}^{-1}$. Thankfully, because this matrix contains a nice structure similar to an atomic matrix, its inverse can be readily computed. 

\begin{theorem}\label{thm:A22}
Consider the matrix $A_{22}$ in \eqref{A22} with $r_i>0$ for $i = 1, \, \dots, \, n$. Then ${A_{22}}^{-1} = M$ where $M$ is a matrix composed of elements $m_{j,k}$ satisfying
 \begin{subequations}\begin{align}
m_{j,n} &=  \frac{1}{\displaystyle \sum^n_{\ell = 1}\frac{r_j}{r_{\ell}}}, \quad j = 1,\, \dots,\, n,\label{m_in}
\\
m_{\ell,j}  & =  \frac{1}{r_{\ell}r_{j+1}}\left(\sum^n_{k = 1}\frac{1}{r_k}\right)^{-1}, ~ j = 1, \dots, n-1, \,  \\  & \qquad \qquad \qquad \quad \quad \quad \quad \quad \quad \&  \,\ell  \neq j+1, \nonumber
\\
m_{j+1,j}  &
= 
\frac{1}{{r_{j+1}}^2}\left(\sum^n_{k = 1}\frac{1}{r_k}\right)^{-1} -\frac{1}{r_{j+1}}.
\end{align}\end{subequations}
\end{theorem}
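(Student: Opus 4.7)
My approach is straightforward verification: construct the matrix $M$ with the entries prescribed in the theorem and show directly that $A_{22}M = I_n$. Since $A_{22}$ is square, exhibiting a right inverse simultaneously proves invertibility, so no separate nonsingularity argument is needed. What makes the computation tractable is the extreme sparsity of $A_{22}$: for $j = 1, \ldots, n-1$ the $j$-th row has only two nonzero entries, namely $r_1$ in column $1$ and $-r_{j+1}$ in column $j+1$, while the $n$-th row is the all-ones vector.

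Exploiting this sparsity, the inner product of the $j$-th row of $A_{22}$ with the $k$-th column of $M$ collapses to $(A_{22}M)_{jk} = r_1 m_{1,k} - r_{j+1} m_{j+1,k}$ for $j < n$, and to the column sum $(A_{22}M)_{nk} = \sum_{\ell=1}^{n} m_{\ell,k}$ for $j = n$. The verification then splits into four sub-cases depending on whether $k = n$ or $k < n$ and whether $j = n$, $j = k$, or $j \notin \{k, n\}$.

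For $j < n$ and $k = n$, formula \eqref{m_in} yields $r_1 m_{1,n} = r_{j+1} m_{j+1,n} = (\sum_\ell 1/r_\ell)^{-1}$, and the two terms cancel. For $j < n$, $k < n$, $j \neq k$, both $m_{1,k}$ and $m_{j+1,k}$ fall under the generic off-diagonal formula ($\ell \neq j+1$), and an analogous cancellation occurs. The diagonal case $j = k < n$ uses the corrected formula for $m_{j+1,j}$: its additive term $-1/r_{j+1}$ is engineered precisely so that after multiplication by $-r_{j+1}$ it contributes $+1$, while the remaining symmetric contributions cancel as before. For $j = n$ and $k < n$, absorbing the correction term back into the generic expression allows the column sum to be written as $(\sum_\ell 1/r_\ell)(r_{k+1})^{-1}(\sum_s 1/r_s)^{-1} - 1/r_{k+1} = 0$. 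Finally, for $j = k = n$, the column sum reduces to $\sum_\ell (r_\ell \sum_s 1/r_s)^{-1} = 1$.

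The arithmetic is elementary; the real obstacle is purely notational, since one must keep track of two independent index shifts simultaneously --- the row-versus-column shift $j \mapsto j+1$ induced by the all-ones last row of $A_{22}$, and the exceptional index $\ell = j+1$ in the piecewise definition of $m_{\ell,j}$. A more constructive alternative would be to decompose $A_{22} = \bsm{r_1 \mathbf{1}_{n-1} & -D \\ 1 & \mathbf{1}_{n-1}^T}$ with $D = \diag(r_2, \ldots, r_n)$ and invert via the Schur complement, which would derive rather than merely verify the formulas; however, the direct verification outlined above is considerably shorter and makes transparent the role of every term in $M$.
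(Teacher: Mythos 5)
Your proposal is correct and follows essentially the same approach as the paper: both work entry-wise with the identity $A_{22}M = I_n$, using the last (all-ones) row to produce column-sum conditions and the sparse rows $j<n$ to produce the two-term relations $r_1 m_{1,k} - r_{j+1} m_{j+1,k}$. The only difference is direction --- you verify the stated $M$ directly (which also settles invertibility immediately, since a right inverse of a square matrix is the inverse), whereas the paper solves those same relations to derive the entries of $M$; your case analysis and cancellations all check out.
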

\begin{proof}
The problem can be cast as finding the unique solution to 
\begin{align}
A_{22}M = I_n,
\end{align}
or, in an expanded form,
\begin{align}\label{big_mat}
 & \begin{bmatrix}
r_1  & -r_2 & 0 & \dots & 0 \\
r_1 & 0 & -r_3 & \ddots & \vdots \\
\vdots & \vdots & \ddots & \ddots & 0 \\
r_1 & 0 & \dots & 0 & -r_n \\
1 & 1 & \dots & 1 & 1
\end{bmatrix}
\\
 & \begin{bmatrix}
m_{1,1} & m_{1,2} &\dots & \dots & m_{1,n} \\
m_{2,1} & m_{2,2} &\dots & \dots & m_{2,n} \\
\vdots & \vdots &\vdots & \vdots & \vdots \\
m_{n-1,1} & m_{n-1,2}&\dots & \dots & m_{n-1,n}\\
m_{n,1} & m_{n,2} &\dots &\dots & m_{n,n} \\
\end{bmatrix}\\
 & = 
\begin{bmatrix}
1  &0 &  \dots &0 & 0 \\ 0 & 1 & \ddots & \ddots & 0 \\
\vdots & \ddots & \ddots & \ddots & \vdots \\
0 & \ddots & \ddots & \ddots & 0 \\
0 & 0 & \dots & 0 & 1
\end{bmatrix}
\end{align}
where $m_{j,k}$ are the elements of  $M$. 

Multiplying through by the row of 1's in $A_{22}$ gives the following relations for the column sums of $M$
\begin{subequations}\begin{align}
\sum^n_{j = 1} m_{j,k} &= 0, \quad \forall k \neq n, \label{sum_other} \\
\sum^n_{j = 1} m_{j,n} &= 1,\label{sum_i}
\end{align}\end{subequations}
and, similarly, multiplying through by the remaining rows implies
\begin{subequations}\label{resistances}\begin{align}
r_1m_{1,j}- r_km_{k,j}  & = 0,\quad j = 1, \dots, n,     \,j  \neq k-1, \\
r_1m_{1,k-1}- r_km_{k,k-1}  & = 1, \quad  k = 1, \, 2, \, \dots, \, n.
\end{align}\end{subequations}
Subtracting the various expressions in \eqref{resistances} from each other gives
\begin{subequations}\begin{align}
m_{\ell,j}  & =  \frac{r_k}{r_{\ell}}m_{k,j} , ~j = 1, \dots, n, \,    \,k  \neq j+1 , \,\ell  \neq j+1,~ \nonumber
\\ & \qquad \qquad  \quad \&~k = 1, \dots, \, n, \ell = 1, \dots, n, \label{res1} \\
m_{\ell,k-1}  & = \frac{1+ r_km_{k,k-1} }{r_{\ell}}, k = 2, \dots, \, n, \ell = 1, \dots, n. \label{res2}
\end{align}\end{subequations}

From these relations, the $n^{\text{th}}$ column of $M$ can be extracted. Starting from \eqref{sum_i} and substituting in \eqref{res1} means
\begin{align}
\sum^n_{\ell = 1}\frac{r_k}{r_{\ell}}m_{k,n} &= 1.
\end{align}
Fixing $k = n$ gives
\begin{align}
\sum^n_{\ell = 1}\frac{r_n}{r_{\ell}}m_{n,n} &= 1,
\end{align}
so 
\begin{align}
m_{n,n} &= \frac{1}{\displaystyle\sum^n_{\ell = 1}\frac{r_n}{r_{\ell}}}.
\end{align}
Using \eqref{res1}, the rest of the $n^{\text{th}}$ column's elements $m_{{k},n}$ for $k = 1, \dots, n-1$ can be computed
\begin{align}
m_{j,n} &=  \frac{r_n}{r_{j}}\frac{1}{\displaystyle \sum^n_{\ell = 1}\frac{r_n}{r_{\ell}}}
= \frac{1}{\displaystyle \sum^n_{\ell = 1}\frac{r_j}{r_{\ell}}}.\label{m_in}
\end{align}

To compute the remaining elements of $M$, it is noted that \eqref{resistances}
implies
\begin{subequations}\label{some_eqns}\begin{align}
m_{k,j}  & = \frac{r_1}{r_k}m_{1,j},\quad  j = 1, \dots, n, \& \,j  \neq k-1, \label{prv_one} \\
m_{k,k-1}  & = \frac{r_1m_{1,k-1}-1}{r_k}, \quad  k = 1, \, 2, \, \dots, \, n\label{last_one}
\end{align}\end{subequations}
where \eqref{last_one} can be re-written as
\begin{align}
\label{anothereq}
m_{k+1,k}  & = \frac{r_1m_{1,k}-1}{r_{k+1}}.
\end{align}
Substituting expressions \eqref{prv_one} and \eqref{anothereq} into \eqref{sum_other}, gives
\begin{align}
& \sum^n_{\ell = 1, \ell \neq k+1}\frac{r_1}{r_\ell}m_{1,k} + \frac{r_1m_{1,k}-1}{r_{k+1}}= 0, \quad k = 1, \dots, n-1.
\end{align}
In other words,
\begin{align}
 m_{1,k} = \left(\sum^n_{\ell = 1}\frac{r_1}{r_\ell}\right)^{-1} \frac{1}{r_{k+1}}, \quad k = 1, \dots, n-1.
\end{align}
The remaining elements of $M$ are then obtained from \eqref{some_eqns} 
 \begin{subequations}\begin{align}
m_{\ell,j}  & = \frac{r_1}{r_{\ell}}\left(\sum^n_{k = 1}\frac{r_1}{r_k}\right)^{-1} \frac{1}{r_{j+1}} 
\\ & 
=  \frac{1}{r_{\ell}r_{j+1}}\left(\sum^n_{k = 1}\frac{1}{r_k}\right)^{-1}, \quad j = 1, \dots, n \,  \&  \,\ell  \neq j+1, 
\\
m_{j+1,j}  & = \frac{r_1}{r_{j+1}}\left(\sum^n_{k = 1}\frac{r_1}{r_k}\right)^{-1} \frac{1}{r_{j+1}}-\frac{1}{r_{j+1}}
\\ & 
= 
\frac{1}{{r_{j+1}}^2}\left(\sum^n_{k = 1}\frac{1}{r_k}\right)^{-1} -\frac{1}{r_{j+1}}.
\end{align}\end{subequations}
\end{proof}


\subsection{State-space model}\label{sec:ODE}
With the matrix ${A_{22}}^{-1}$ inverse defined, an explicit solution for the ODE model \eqref{sys_ODE} can be stated. To arrive at this statement, several matrices and vectors first have to be established. Defining the vector of open-circuit voltages as
\begin{align}
 \text{OCV}(z(t))  = \begin{bmatrix}\text{OCV}(z_1(t))  \\ \vdots \\ \text{OCV}(z_n(t)) \end{bmatrix},
\end{align}
then the solution to the branch current equation \eqref{i_exp} can be expressed as
\begin{align}\label{i_exp2}
i(t) = \Pi_{v}(\text{OCV}(z(t))+w(t)) + \Pi_I I(t)
\end{align}
where
\begin{subequations}\begin{align}
\Pi_v
  & = 
-\begin{bmatrix}\sum_{i = 1}^{n-1}m_{1,i} &  -m_{1,1} &  -m_{1,2} & \dots &  -m_{1,n-1}
\\
\vdots & \vdots & \vdots & \vdots &  \vdots 
\\ 
\sum_{i = 1}^{n-1}m_{i,n} & -m_{n,1} & -m_{n,2} & \dots & -m_{n,n-1}\end{bmatrix},
\\
\Pi_I  &= -
 \begin{bmatrix}m_{1,n} &
 \dots 
& m_{n-1,n} 
 & 
 m_{n,n}
 \end{bmatrix}^T.\label{Pi_I}
\end{align}\end{subequations}
 Next, the vector of  concatenated voltages is defined
\begin{align}
\check{v}(t)  =\bm{1}_nv(t) =  \begin{bmatrix}w_1(t) + \text{OCV}(z_1(t))+ r_1i_1(t) \\ \vdots \\ w_n(t) + \text{OCV}(z_n(t))  + r_ni_n(t)\end{bmatrix}.
\end{align}
Using the substitution \eqref{i_exp2}, this voltage vector of repeating elements can be formulated as
\begin{align}
\label{yeq}
\check{v}(t)   &  = Cx(t) + D_{\text{OCV}}\text{OCV}(z(t)) + D_II(t).
\end{align}
where $C = W + \bm{r}\Pi_v {W}$ with $\bm{r} = \text{diag}(r_k), \, k = 1, \, \dots, \, n$ and $W \in \mathbb{R}^{n \times 2n}$ is a matrix full of zeros apart from $W_{k,j} = 1$ if $j = 2k$ for $k = 1, \, \dots, \, n$ (more explicitly, $w = Wx$).  Also,  $D_{\text{OCV}}~=~I_n~+~\bm{r}\Pi_v$  and $D_{I}~=~\bm{r}\Pi_{I}$.  

With these matrices defined, the dynamics of  \eqref{sys_ODE} can be written as 
\begin{align}
\label{xeq}
\dot{x}  & = Ax + B_{\text{OCV}}\text{OCV}(z) + B_II(t),
\end{align}
with $ A= A_{11}-A_{12}{A_{22}}^{-1}A_{21}$ with $A_{12}{A_{22}}^{-1}A_{21} = \bar{B}\Pi_{v}W$, $\,B_{\text{OCV}} = \bar{B}\Pi_v$ and $B_I= \bar{B}\Pi_I$ where $ \bar{B}=A_{12}$. 

Two key features of this  model are 1. it is an ODE whose vector field is written explicitly in terms of the circuit parameters (the various resistance and capacitances) and 2. the model nonlinearities (from the open circuit voltages OCVs) enter in an affine manner. With the added assumption that these OCVs are slope-restricted, then this nonlinear circuit model can be thought of as a Lurie system \cite{khalil2002nonlinear,drummond2019feedback,vidyasagar2002nonlinear,brogliato2007dissipative}, a class of nonlinear systems whose analysis is tractable (as illustrated in the observer design of the following section). 


\begin{remark}
{Since the main issue of obtaining the state-space form of the parallel pack model was resolving Kirchhoff's laws, it would seem that the above approach can be readily generalised to the case when the cell dynamics are described by electrochemical models like the single particle model \cite{guo2010single}, the single particle model with electrolyte and the Doyle-Fuller-Newman model \cite{doyle1993modeling,drummond2019feedback}.}
\end{remark}


\section{State estimator design}\label{sec:obs}
To illustrate the potential of this model, a state-estimator for a pack with parallel connected cells is now introduced. The key point of this estimator is that it guarantees boundedness of the estimation error of the parallel connected pack model's state to some set $\mathcal{E}$ (defined in Proposition \ref{theorem}) even when the nonlinear model is subject to disturbances. Once again, this result was strongly motivated by simplifying the convergence criteria of \cite{D_L}. 


\subsection{State-estimator}

The goal of the estimator will be to obtain more accurate values of the states under the assumption that the pack is being perturbed by disturbances on the current $d_I \in \mathcal{L}_2$ and voltage $d_v \in \mathcal{L}_2$. Under this assumption, the battery model plant becomes
\begin{subequations}\label{pertsyst}\begin{align}
\dot{x}(t)  & = Ax(t) + B_{\text{OCV}}\text{OCV}(z(t)) + B_I(I(t) + d_I(t)), \\
\check{v}(t)   &  = Cx(t) + D_{\text{OCV}}\text{OCV}(z) +D_I(I(t) + d_I(t)) + d_v(t).
\end{align}\end{subequations}

The following state-estimator is proposed for this system
\begin{align}\label{sys_obs}
\dot{\hat{x}}(t) = A \hat{x}(t) + B_{\text{OCV}}\text{OCV}(\hat{z}(t))+B_II(t) - K(\check{v}(t)-\check{\hat{v}}(t)) 
\end{align}
with $\hat{x} = [\hat{z}_k , \hat{w}_k], \, k = 1, \dots ,\, n$ being the estimated states, $\check{\hat{v}}$ the predicted voltage, $\hat{i}_k$ the estimated currents and 
\begin{align}
\check{\hat{v}}(t) = \begin{bmatrix}\hat{v}(t) \\ \vdots \\ \hat{v}(t) \end{bmatrix} = \begin{bmatrix}\hat{w}_1(t) + \text{OCV}(\hat{z}_1(t)) + r_1\hat{i}_1(t)  \\ \vdots \\ \hat{w}_n(t) + \text{OCV}(\hat{z}_n(t))+ r_n\hat{i}_n(t) \end{bmatrix}
\end{align}
the voltage concatenation.
 
Defining the error between the plant \eqref{pertsyst} and the state estimator \eqref{sys_obs} as $e = x- \hat{x} $ then a set of error dynamics can be written
 \begin{align}\label{sys_err}
\dot{e}(t) = A {e}(t) +B_{\text{OCV}}\Delta \text{OCV} +  K (\check{{v}}(t)-\check{\hat{v}}(t)) + B_Id_I(t),
\end{align}
where $\Delta \text{OCV} $ is the open circuit voltage error
\begin{align}
\Delta \text{OCV} = \text{OCV}({z}(t))-\text{OCV}(\hat{z}(t)).
\end{align}
Demonstrating boundedness of this error system guarantees  that the estimator can provide a good estimate of the plant's state, even when it is subject to the disturbances $d_I(t)$ and $d_v(t)$.

\subsection{Estimator design}

The following proposition can be used to guide the design of this state estimator.

\begin{proposition}\label{theorem}
Consider the Li-ion battery pack model \eqref{pertsyst} with the state estimator \eqref{sys_obs} and assume that each cell's open circuit voltage OCV($z_k$) is a strictly monotonic function of its state-of-charge $z_k$
\begin{align}\label{OCV}
\frac{d\text{OCV}(z_k)}{d z_k} = \delta \in [\underline{\delta}, \overline{\delta}], \quad \underline{\delta} >0.
\end{align}
Set the estimator gain to $K = \overline{K} + \tilde{K}$ where $\overline{K} =  -\bar{B}\Pi_v(I_n + \bm{r}\Pi_v)^{-1}$, $\tilde{K} = \kappa(I_n + \bm{r}\Pi_v)^{-1}$ and $\kappa = \text{diag}(\kappa_1, \dots, \kappa_n)$ is built from the $2 \times 1$ blocks
 \begin{align}
 \kappa_j = \begin{bmatrix} \kappa^1_j \\ \kappa_j^2\end{bmatrix}.
\end{align} 
The gains $\kappa_j^1, \, \kappa_j^2$ are chosen such that the roots of the quadratic
\begin{align}\label{p_s}
    p(s,\delta)  & =s^2-b(\delta)s+ c(\delta),
\end{align}
with 
\begin{align}
    b(\delta) &  = \left(\kappa_j^1\delta+\kappa_j^2-\frac{1}{R_jC_j}\right), 
    \\  c(\delta)  &=\kappa^1_j\delta\left(\kappa_j^2-\frac{1}{R_jC_j}-  \kappa_j^2 \right), 
\end{align}
strictly lie within the left half plane for all $\delta \in [\underline{\delta}, \, \overline{\delta}]$ and each $j = 1, \, \dots, \, n$. With this choice of gain, then the error system is stable in the large meaning that is has a unique attractive global equilibrium point.

\end{proposition}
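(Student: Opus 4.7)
The plan is to show that the gain $K = \overline{K} + \tilde{K}$ is engineered precisely so as to decouple the $2n$-dimensional error dynamics \eqref{sys_err} into $n$ independent planar Lurie subsystems, one per cell, whose stability can then be certified by a one-parameter family of quadratic characteristic polynomials. The global convergence conclusion follows from the fact that Aizerman's conjecture holds for planar systems with a single scalar slope-restricted nonlinearity, which is exactly the structure \eqref{OCV} gives after decoupling.

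First I would substitute the output error $\check{v} - \check{\hat{v}} = Ce + D_{\text{OCV}}\,\Delta\text{OCV} + (\text{disturbance terms})$ into \eqref{sys_err} so that the right-hand side is written in terms of $e$, $\Delta\text{OCV}$, and exogenous inputs only. Using the identities $C = (I_n + \bm{r}\Pi_v)W$, $D_{\text{OCV}} = I_n + \bm{r}\Pi_v$, and $B_{\text{OCV}} = \bar{B}\Pi_v$ from Section \ref{sec:ODE}, a direct computation gives $K\,D_{\text{OCV}} = -\bar{B}\Pi_v + \kappa$ and $K\,C = (-\bar{B}\Pi_v + \kappa)\,W$. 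Combining these with the expression $A = A_{11} \pm \bar{B}\Pi_v W$ produced in Section \ref{sec:ODE}, the off-diagonal coupling term $\bar{B}\Pi_v W$ cancels, leaving an effective state matrix that is block-diagonal in the cell index and a nonlinear input multiplier equal to the block-diagonal matrix $\kappa$. The closed-loop error system then splits into $n$ two-state Lurie blocks
\begin{align}
\dot{e}_j = A_j^{\mathrm{cl}}\,e_j + \kappa_j\bigl(\text{OCV}(z_j)-\text{OCV}(\hat z_j)\bigr) + \omega_j(t), \quad j=1,\dots,n,
\end{align}
with $e_j = (z_j - \hat z_j,\,w_j - \hat w_j)$, disturbance $\omega_j$ collecting the $d_I, d_v$ contributions, and $A_j^{\mathrm{cl}}$ a $2\times 2$ matrix built from $1/(R_jC_j)$ and $\kappa_j^1,\kappa_j^2$.

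For each decoupled block, the scalar nonlinearity $\text{OCV}(z_j)-\text{OCV}(\hat z_j)$ sits in the sector $[\underline\delta,\overline\delta]$ by \eqref{OCV} and is in fact slope-restricted in this sector. Replacing it by a linear gain $\delta\in[\underline\delta,\overline\delta]$ and computing the characteristic polynomial of the linearised block should reproduce exactly the quadratic $p(s,\delta)$ in \eqref{p_s}; the coefficients $b(\delta)$ and $c(\delta)$ arise from the trace and determinant of $A_j^{\mathrm{cl}} + \delta\,\kappa_j\,[\,1\ 0\,]$. Thus the Hurwitz hypothesis on $p(\cdot,\delta)$ is precisely the \emph{linear} absolute stability condition for the planar block uniformly in the sector. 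The theorem of Pliss establishing Aizerman's conjecture in dimension two then implies that each nonlinear block has a unique globally attractive equilibrium in the disturbance-free case, and a bounded-input bounded-state attractor when $d_I, d_v \in \Ltwo$. Stacking the $n$ independent blocks yields stability in the large for the full error system.

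The main obstacle will be the first step: carrying out the algebra so that the cross-cell coupling $\bar{B}\Pi_v W$ visibly cancels rather than doubling, which requires tracking the signs of $\Pi_v$ (which carries an explicit minus) and of the $A_{12}A_{22}^{-1}A_{21}$ term consistently with the conventions of Section \ref{sec:ODE}. A secondary subtlety lies in the invocation of Aizerman itself: the conjecture fails in dimension three and higher, so it is essential that the decoupling produces \emph{exactly} planar blocks, and that the scalar slope restriction \eqref{OCV} — rather than mere sector boundedness — is used, since it is in the slope-restricted planar setting that the Aizerman-type conclusion is unconditionally available.
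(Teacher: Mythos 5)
Your proposal follows essentially the same route as the paper's own proof: the chosen gain gives $KD_{\text{OCV}}=-\bar{B}\Pi_v+\kappa$ and $KC=(-\bar{B}\Pi_v+\kappa)W$, so the error dynamics collapse to $\dot{e}=(A_{11}+\kappa W)e+\kappa\,\Delta\text{OCV}+d(t)$, i.e.\ $n$ decoupled planar Lurie blocks whose sector linearisations $E_j(\delta)$ have characteristic polynomial $p(s,\delta)$ obtained from the trace and determinant, exactly as in the paper. The only cosmetic difference is the closing citation: the paper invokes the second-order validity of the Kalman (slope-restricted) conjecture, whereas you attribute the planar Aizerman result to Pliss (who is in fact known for counterexamples to that conjecture); the mathematical content of the final step is the same.
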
 

\begin{proof}
With the chosen gain $K$, the error dynamics can be expanded out as 
 \begin{subequations}\label{error_sys_proof}\begin{align}
\dot{e}(t)  & = A {e}(t) +B_{\text{OCV}}\Delta \text{OCV} +  K (\check{{v}}(t)-\check{\hat{v}}(t)) + B_Id_I(t),
\\
 & = 
 (A_{11}+ \bar{B}\Pi_{v}W +  (\overline{K}+\tilde{K}) (I_n + \bm{r}\Pi_{v})W) {e}(t) 
 \\
 & \quad+\left(\bar{B}\Pi_v + (\overline{K}+\tilde{K}) (I_n + \bm{r}\Pi_v)\right)\Delta \text{OCV} \nonumber
 \\ &\quad  +Kd_v(t) + B_Id_I(t). \nonumber
\end{align}\end{subequations}
Substituting in the expressions for $\overline{K}$ and $\tilde{K}$ and defining 
\begin{align}
    d(t) = Kd_v(t) + B_Id_I(t),
\end{align} then these error dynamics reduce to
\begin{align}
\dot{e}(t)  & = 
 (A_{11}+ \kappa W) {e}(t) +\kappa\Delta \text{OCV} +d(t).\label{dyns_e_nonlin}
\end{align}
The choice of gain has decoupled the error dynamics of each cell from one another, with \eqref{dyns_e_nonlin} being composed of $n$ decoupled, second order systems. Since the error dynamics of each cell are now decoupled and second order, they satisfy the Kalman conjecture \cite{KC1,KC2,KC3,KC4} and so the stability of the nonlinear system can be checked from its linearisation as conjectured (wrongly for the general case) by Kalman in \cite{kalman1957physical} with its modern interpretation given in \cite{drummond2018aizerman}. As such, for each of these decoupled second order dynamics representing each cell in the pack, verifying the stability of the nonlinear system \eqref{dyns_e_nonlin} is equivalent to checking the stability of each of its linearisations.

Linearising the error dynamics \eqref{dyns_e_nonlin} for each cell (with $e_k$ being the error in the state prediction of the $k^\text{th}$ cell) gives 
\begin{align}
\dot{e}_k = E_k(\delta)e_k
+  \begin{bmatrix}d_{2k-1} \\ d_{2k}\end{bmatrix},
\end{align} 
where 
\begin{align}\label{P}
E_k(\delta)  = \begin{bmatrix} \kappa_k^1 \delta  & \kappa_k^1 \\ \kappa_k^2 \delta & \kappa_k^2-\frac{1}{R_kC_k}\end{bmatrix}.
\end{align}
Each of these linear systems are stable provided $E_k(\delta)$ is Hurwitz for all $\delta \in [\underline{\delta}, \, \overline{\delta}]$. And, since \eqref{p_s} is the characteristic equation of $E_k(\delta) $, it's roots determine the eigenvalues of $E_k(\delta)$.

\end{proof}

\begin{remark}
Convergence of the state estimator error from Proposition \ref{theorem} implies that the system \eqref{pertsyst} is at least detectable. However, unless some specific feature in the system structure can be exploited, proving the stronger notion of observability for the nonlinear system   will prove challenging, as it will involve computing Lie derivatives along the vector field, which do not scale well to large systems.  
\end{remark}

\begin{remark}
Standard state estimators such as the extended Kalman filter (EKF) could also be applied but the design of Proposition \ref{theorem} has the benefit of a) exploiting the system structure to decouple the cell dynamics and b) providing a simple check to guarantee error convergence. Normally, the EKF is not accompanied with similar guarantees. 
\end{remark}

The main benefit of Proposition \ref{theorem} is that it gives algebraic conditions to construct the estimator gains \eqref{p_s} for the nonlinear system. These conditions are rather simple to check, but stronger results may be obtained using a numerical search. This is exemplified by the following proposition which applies the classical observer design approach of \cite{arcak2001nonlinear} to obtain an upper bound for the observer error and because the following is an linear matrix inequality it can be solved using convex optimisation.

\begin{proposition}\label{prop2}
Define the matrices
\begin{subequations}\begin{align}
    A_e  & = A_{11}+ \bar{B}\Pi_{v}W,
    \quad 
    A_{e,2}   = (I_n + \bm{r}\Pi_{v})W, \\
    B_e  & = \bar{B}\Pi_v ,
    \quad 
    B_{e,2}=  (I_n + \bm{r}\Pi_v).
\end{align}\end{subequations}
and, for some $P \in \mathbb{S}_{\succ 0}^{2n}$, $Q \in \mathbb{R}^{2n\times n} $, $\gamma \geq 0$ and $\tau \in \mathbb{D}^n_+$,
\begin{align}
    M = \begin{bmatrix} PA_e + QA_{e,2} & PB_e  +QB_{e,2} & P \\ 0 & 0 & 0 \\
    0 & 0 & -\frac{1}{2}\gamma\end{bmatrix} 
\end{align}
\begin{align}
    \Omega = \begin{bmatrix} -\underline{\delta}\overline{\delta}Z^TZ & \frac{1}{2} (\overline{\delta}+\underline{\delta})\tau Z & 0 \\ \frac{1}{2} (\overline{\delta}+\underline{\delta})Z^T\tau & -\tau & 0 \\ 0 & 0 & 0 \end{bmatrix},
\end{align}
where $Z \in \mathbb{R}^{2n \times n}$ is a matrix full of zeros apart from $Z_{k,j} = 1$ if $j = 2k-1$ for $k = 1, \, \dots, \, n$ (in other words $z = Zx$).

If the linear matrix inequality
\begin{align}
M + M^T + \Omega \prec 0 
    \end{align}
 holds, then  with the estimator's gain set to $K = P^{-1}Q$ its error is bounded from above by
\begin{align}
    V(e(t)) \leq V(e(0))  + \gamma \|d\|_2^2
\end{align}
for all $d \in \mathcal{L}_2$ where $V(e(t)) = e(t)^TPe(t)$ provided the state-of-charge of both the estimator and the plant remain within $[0,1]$. 
\end{proposition}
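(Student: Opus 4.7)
My plan is to establish the bound by a standard quadratic Lyapunov argument combined with an S-procedure that absorbs the slope restriction on $\text{OCV}$ into the LMI, using the change of variables $Q=PK$ to linearise the design condition in $(P,Q)$.

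First I would carry over the compact form of the closed-loop error dynamics obtained in the proof of Proposition~\ref{theorem}, namely $\dot e = (A_e + KA_{e,2})e + (B_e + KB_{e,2})\Delta\text{OCV} + d(t)$, with $d(t)=Kd_v(t)+B_Id_I(t)$. Taking the candidate $V(e)=e^TPe$ and computing $\dot V=2e^TP\dot e$, the products $PKA_{e,2}$ and $PKB_{e,2}$ become $QA_{e,2}$ and $QB_{e,2}$ under the substitution $Q=PK$, which is precisely why the design matrix $M$ appears in exactly the form stated. Assembling the augmented vector $\xi = [e^T,\,\Delta\text{OCV}^T,\,d^T]^T$, a block-by-block inspection shows that
\begin{align}
\dot V(e) - \gamma\,d^T d \;=\; \xi^T(M+M^T)\xi,
\end{align}
so that everything needed for an input-to-state $\mathcal{L}_2$ bound on $V$ already sits in $M+M^T$, modulo the cross-term involving the nonlinearity $\Delta\text{OCV}$, which is where the sector multiplier enters.

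Next I would invoke the slope restriction \eqref{OCV} to supply an S-procedure inequality. Provided $z_k(t),\hat z_k(t)\in[0,1]$ so that the bounds $\underline\delta,\overline\delta$ apply, the cellwise quadratic sector bound
\begin{align}
\bigl(\Delta\text{OCV}_k - \underline\delta (Ze)_k\bigr)\bigl(\overline\delta (Ze)_k - \Delta\text{OCV}_k\bigr)\ \geq\ 0
\end{align}
holds for each $k$. Weighting by $\tau_k\geq 0$, summing, and rewriting the resulting quadratic form in $\xi$ yields exactly $\xi^T\Omega\xi \geq 0$ for every $\tau\in\mathbb{D}^n_+$ (the matrix $\Omega$ being the Lagrangian associated with this multiplier, expressed in terms of the SOC selector $Z$).

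Finally, adding this non-negative term to $\dot V - \gamma d^T d$ and applying the hypothesis $M+M^T+\Omega \prec 0$ gives $\dot V(e) - \gamma d^T d \leq \xi^T(M+M^T+\Omega)\xi \leq 0$. Integrating from $0$ to $t$ and using $d\in\mathcal{L}_2$ yields the advertised bound $V(e(t))\leq V(e(0)) + \gamma\|d\|_2^2$, while $K=P^{-1}Q$ recovers the observer gain from the LMI solution. The main obstacle I anticipate is purely bookkeeping: tracking the block dimensions in $\Omega$ and $M$ consistently (the selector $Z$ has size $n\times 2n$, and $\tau$ must be interleaved correctly in the off-diagonal blocks $Z^T\tau$ and $\tau Z$) so that the sum $\xi^T(M+M^T+\Omega)\xi$ collects exactly into $\dot V-\gamma d^Td$ minus a non-negative sector term; the rest is standard dissipation analysis. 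One caveat worth flagging in the statement is that the bound is only guaranteed while both the plant and estimator SOCs lie within the interval on which the slope restriction is valid, since outside $[0,1]$ the quadratic sector inequality fails and the S-procedure step is no longer justified.
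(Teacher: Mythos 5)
Your proposal is correct, and it is in substance the argument the paper merely points to: the paper's own proof consists of casting the error dynamics into the Lurie form $\dot e=(A_e+KA_{e,2})e+(B_e+KB_{e,2})\Delta\text{OCV}+d$ and then invoking the circle-criterion observer synthesis of Arcak and Kokoto\-vi\'c without carrying out the computation. You instead execute that computation explicitly -- the quadratic storage $V=e^TPe$, the linearising substitution $Q=PK$ that produces the blocks of $M$, the slope-restriction sector inequality with multiplier $\tau$ absorbed by an S-procedure, and the integration of $\dot V-\gamma d^Td\le 0$ -- which makes the proof self-contained rather than a citation, at the cost of the block bookkeeping you flag. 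Your derivation also correctly identifies two small infelicities in the statement itself: the weighted sector sum naturally yields $Z^T\tau Z$ in the $(1,1)$ block of $\Omega$ (the printed $Z^TZ$ matches only if $\tau$ is a scalar multiple of the identity), and the selector consistent with $z=Zx$ and with the block dimensions of $\Omega$ is $n\times 2n$ (with the transposes placed accordingly), not $2n\times n$ as typeset; neither affects the validity of your argument, and your caveat about requiring the SOCs to stay in $[0,1]$ is exactly the hypothesis under which the sector bound, and hence the S-procedure step, is justified.
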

\begin{proof}
By writing the error dynamics of \eqref{error_sys_proof} as
 \begin{subequations}\begin{align}
\dot{e}(t)  &  = 
 (A_{11}+ \bar{B}\Pi_{v}W +  K(I_n + \bm{r}\Pi_{v})W) {e}(t) 
 \\
 & \quad+\left(\bar{B}\Pi_v + K (I_n + \bm{r}\Pi_v)\right)\Delta \text{OCV} 
+d(t), \nonumber
 \\
 & = (A_e + KA_{e,2})e(t) + (B_e+KB_{e,2}) \Delta \text{OCV} + d(t)\nonumber,
\end{align}\end{subequations}
these dynamics are in the standard Lurie system form which allows the classical nonlinear observer synthesis results of \cite{arcak2001nonlinear} to be applied. Applying these results gives the conditions and bounds of the theorem. 
\end{proof}

\begin{figure}
\centering
\includegraphics[width=0.4\textwidth]{{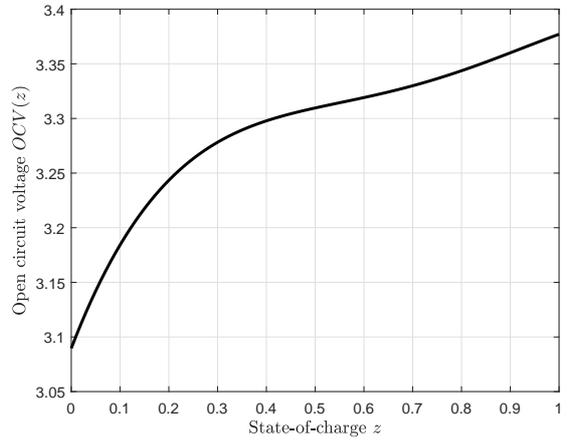}}
        \caption{{Open circuit voltage.}} 
        \label{fig:ocv}
\end{figure}

\begin{figure}
\centering
\includegraphics[width=0.4\textwidth]{{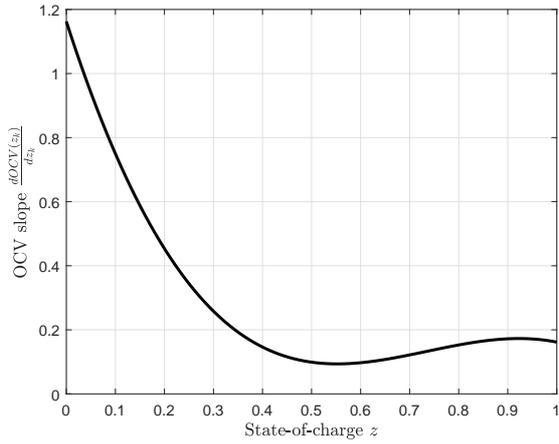}}
        \caption{{Slope of the open circuit voltage.}} 
        \label{fig:ocv_slope}
\end{figure}

\begin{figure}
\centering
\includegraphics[width=0.4\textwidth]{{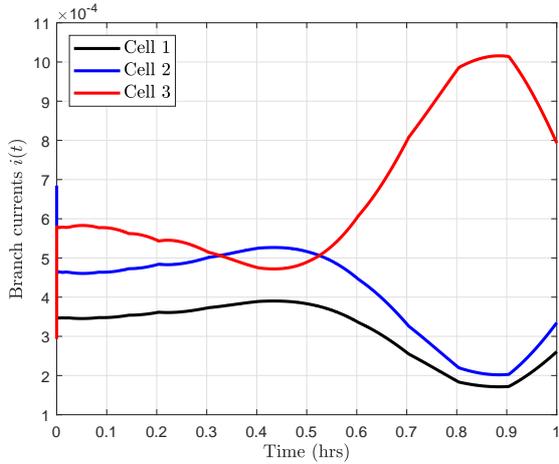}}
        \caption{{Branch currents during the charge.}} 
        \label{fig:i_branch}
\end{figure}

\begin{figure}
\begin{subfigure}{.5\textwidth}
\centering
\includegraphics[width=0.8\textwidth]{{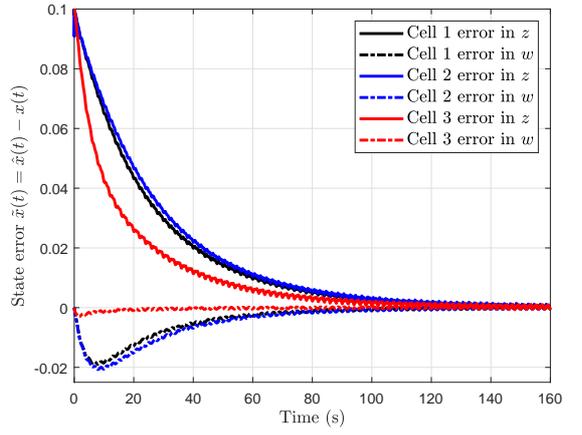}}
        \caption{{Error in the state estimates.}} 
        \label{fig:error_x}
\end{subfigure}
\begin{subfigure}{.5\textwidth}
\centering
\includegraphics[width=0.8\textwidth]{{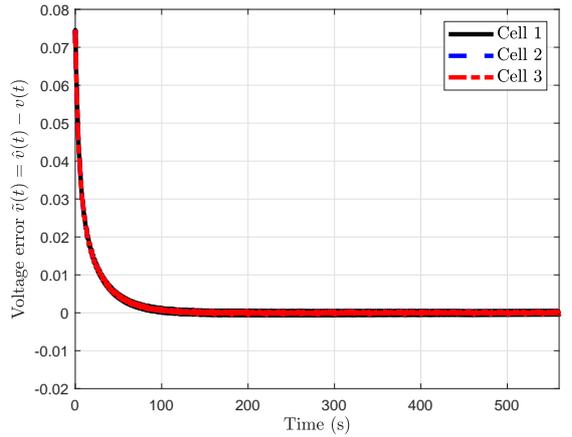}}
        \caption{{Error in the voltage.}} 
        \label{fig:error_v}
\end{subfigure}
\caption{Rapid convergence in both the state and voltage error of the state-estimator designed using Proposition \ref{theorem}.  }
\label{fig:error}
\end{figure}

\begin{table}
\centering 
\renewcommand{\arraystretch}{1.3} 
\begin{tabular}{|c|c|c|c|}
\hline 
& Cell 1 & Cell 2 & Cell 3 \\ 
\hline
$r_k$ & 0.0040& 0.0035 & 0.00045 \\ \hline
$R_k$ & 0.0025 & 0.0015 & 0.0035\\ \hline
$Q_k$ & 1.7 & 2 & 2.3 \\ \hline
$C_k$ & 1500 & 2000& 1000\\ \hline
\end{tabular}
\caption{Cell parameters for the numerical simulation.}
\label{tab:params}
\end{table}


\begin{table}
\centering 
\renewcommand{\arraystretch}{1.0} 
\begin{tabular}{|c|c |c|c|c|c|c|}
\hline
$a_0$  & $a_1$ & $a_2$& $a_3$& $a_4$ & $a_5$ & $a_6$ \\ \hline
 3.0896 & 1.1627 & -2.3821 & 2.1870 & -0.5444 & -0.1939 &  0.0582 \\  \hline
\end{tabular}
\caption{Coefficients of the OCV curve in \eqref{OCV}.}
\label{tab:ocv}
\end{table}

\section{Simulations}

A simulation is introduced in this section to illustrate the potential of the proposed state-estimator and ODE model for the parallel connected Li-ion batteries. Consider three NMC cells connected in parallel with parameter values given in Table \ref{tab:params} taken from \cite{D_L} and the open circuit voltage 
\begin{align}\label{OCV}
\text{OCV}(z) = \sum_{k = 0}^6 a_kz^k
\end{align}
with coefficients $a_k$ given in Table \ref{tab:ocv}. This OCV is shown in Fig. \ref{fig:ocv} and its slope is given in Fig. \ref{fig:ocv_slope}, clearly indicating its strict monotonicity. From Figure \ref{fig:ocv_slope}, the upper and lower slope bounds $\underline{\delta}= 0.0936$ and $\overline{\delta} = 1.1627$  for the OCV are obtained.

Figure \ref{fig:i_branch} shows a simulation of the branch currents of this parallel connected pack under a 1C charging current with $I(t) = 1.4 \times 10^{-3}~A$. The initial conditions were such that the initial state-of-charge for each cell was $z_1(0) = 0.05$, $z_2(0) = 0.1$, $z_3(0) = 0.15$ and the relaxation voltages were zero with $w_k = 0$ for $k = 1,\,2,\,3$. The non-uniform branch currents of the pack are clearly visible in this simulation. 

Figure \ref{fig:error} examines the performance of the state estimator designed in Proposition \ref{theorem}. For this simulation, the plant was again charged at 1C  from the same initial conditions. The observer state was initialised by $\hat{z}(0) = z(0)-0.05$, $\hat{w} = 0$ and the gain was set to
\begin{align}
\kappa_k = -\begin{bmatrix} 0.1 \\  0.1 \end{bmatrix}, \quad \forall k = 1, \, \dots, \, n = 3.
\end{align}
This choice of gain satisfies the stability conditions of Proposition \ref{theorem}. Sinusoidal disturbances were assumed to be perturbing the current and voltage measurements with
\begin{subequations}\begin{align}
d_I(t)  & = I(t)\sin(2\pi t),
\\
d_v(t)  & = I(t)\sin(\pi t). 
\end{align}\end{subequations}

The convergence of the voltage and state errors with this estimator design is shown in Fig. \ref{fig:error}, justifying  the claims of Proposition \ref{theorem}.

\section*{Conclusions}

This paper has introduced a state-space model for lithium ion battery packs connected in parallel. The key result was the solution to Kirchhoff's laws for parallel connected packs, where the various branch currents charging each cell could then be written explicitly in terms of the pack's state-space variables, applied current and the various cell resistances. In this way, the model avoids the need to compute these branch currents numerically. The analytic solution for the branch currents brings insight into the model's dynamics and structure, as highlighted in this paper by the design of a new state estimator for the nonlinear pack model. Simple conditions are stated for this estimator's gains that guarantee its error is convergent, with the conditions being derived from an application of Aizerman's conjecture. It is hoped that the methods developed in this work will lead to the transfer of ideas from the model simulation and battery management system design of series connected battery packs to their parallel counter-parts.

\bibliographystyle{IEEEtranS}
\bibliography{bibliog}

\end{document}